\documentclass[10pt, conference, letterpaper]{ IEEEtran}
\IEEEoverridecommandlockouts
\usepackage{cite}
\usepackage{amsmath,amssymb,amsfonts}
\usepackage{algorithm, algorithmic}
\usepackage{graphicx}
\usepackage{textcomp}
\usepackage{xcolor}
\usepackage{caption}
\usepackage{booktabs}
\usepackage{url}

\def\BibTeX{{\rm B\kern-.05em{\sc i\kern-.025em b}\kern-.08em
    T\kern-.1667em\lower.7ex\hbox{E}\kern-.125emX}}

\usepackage{amsthm}
\newtheoremstyle{bfnote}%
{}{}
{\itshape}{}
{\bfseries}{.}
{ }{\thmname{#1}\thmnumber{ #2}\thmnote{ (#3)}}
\theoremstyle{bfnote}

\newtheorem{theorem}{Theorem}

\newtheorem{lemma}{Lemma}
\newtheorem{remark}{Remark}
\newtheorem{definition}{Definition}
\newtheorem{proposition}{Proposition}

\newcommand{\QSent}{\mathbf{Q}} 
\newcommand{\Language}{\mathcal{L}} 
\newcommand{\ProbMeas}{\mathcal{P}_{I}} 

\begin{document}

\title{Goal-Oriented Logic-based Semantic Communication for Neuro-Symbolic Reasoning with Applications onto Autonomous Driving
}

\author{\IEEEauthorblockN{Ahmet Faruk Saz}
\IEEEauthorblockA{
\textit{Georgia Institute of Technology}\\
Atlanta, GA, USA \\
asaz3@gatech.edu}
\and
\IEEEauthorblockN{Duo Xu}
\IEEEauthorblockA{
\textit{Georgia Institute of Technology}\\
Atlanta, GA, USA \\
asaz3@gatech.edu}
\and
\IEEEauthorblockN{Faramarz Fekri}
\IEEEauthorblockA{
\textit{Georgia Institute of Technology}\\
Atlanta, GA, USA \\
faramarz.fekri@ece.gatech.edu}
}

\maketitle

\begin{abstract}
We consider First-Order Logic (FOL)-based semantic communication for neuro-symbolic decision-making in collaborative environments such as autonomous driving networks. Each connected autonomous vehicle (CAV) converts its partial sensor observations into a natural-language scene description and corresponding grounded FOL evidence. Under an uplink budget, a semantic encoder at each car selects the observations most informative for evaluating traffic rules and transmit to a Road Side Unit (RSU). The RSU fuses all received evidence, evaluates collaborative rules, performs logical deduction for vehicle-specific safety and right-of-way information for constrained downlink transmission. Each CAV combines the received deductions with its local description, enabling a local LLM agent to select a high-level driving action. We develop a principled, verifiable semantic communication method using a random-support Dirichlet--Categorical model of inductive logical probability, providing a modern statistical reinterpretation of Carnap's and Hintikka's systems. From this model, we derive a goal-oriented semantic information-bottleneck formulation that prioritizes evidence transmission by its reduction of uncertainty over task goals. Using 152 traffic rules extracted from the California Driver Handbook, we evaluate the framework on MDrive simulator in CARLA. Under identical communication budgets, semantic evidence selection completes every scenario without safety hazards, whereas uniform evidence selection produces collisions, showcasing semantic communication's superiority.
\end{abstract}

\begin{IEEEkeywords}
semantic communication, neuro-symbolic reasoning, first-order logic, LLM-guided autonomous driving
\end{IEEEkeywords}

\section{Introduction}
\label{sec:introduction}

Semantic and goal-oriented communication shift the objective of a communication system from reconstructing raw data to conveying the information required for a downstream task. This shift is particularly important in collaborative autonomous driving, where occlusions, limited sensing ranges, and viewpoint-dependent visibility provide each connected autonomous vehicle (CAV) with only a partial view of the environment. Consider two vehicles approaching an unsignalized intersection where a truck obstructs their view of each other or of a crossing pedestrian. Such scenarios, represented in the NHTSA pre-crash typology~\cite{najm2007precrash}, pose genuine collision risks. By sharing complementary observations, the vehicles can be alerted to otherwise occluded road users and take appropriate action to avoid a collision. However, exchanging complete sensor streams can be prohibitively expensive under bandwidth and latency constraints. CAVs must instead communicate observations that materially affect safety and driving decisions.

Most semantic communication methods employ learned representations, including deep joint source--channel coding, functional compression, transformer-based encoders, LLMs, and generative models~\cite{Xie2020DeepLE,Yang2023SwinJSCCTS,Liu2024ANI,Wang2024LargeLanguageModelEnabledTS, Saidutta2022AML, yashas}. Although effective, their latent representations generally provide limited insight into how a transmitted feature affects the receiver's decision. Current theoretical approaches based on FOL, semantic rate--distortion, synonymous mappings, information bottlenecks, and rate--distortion--perception offer more principled objectives~\cite{Shao2022ATO,Niu2024AMT,Niu2025RateDistortionPerceptionTI,Saz2024OnTT,Saz2024LossySC,Saz2025DISCDDL,Saz2025AnalysisOS, saz2026goal}. Among these, however, only FOL-based approaches are specifically suited to the verifiable high-level reasoning and decision-making required in autonomous driving.

That is, human driving is not a single pattern-matching process as most deep learning based approaches assume: a driver first interprets the surrounding scene and then reasons over that interpretation using traffic rules, prior knowledge, and commonsense. Authors in \cite{Kothawade2021AUTODISCERNAD} argue that these functions require different computational mechanisms: neural models are well suited to perception and pattern recognition, whereas high-level decisions are more naturally expressed through symbolic reasoning. Current end-to-end driving systems often couple perception, functional compression, trajectory planning, and control within a single neural architecture, thereby obscuring the premises and reasoning that produce an action. This is especially problematic because driving may require deductive, inductive, abductive, default, counterfactual, causal, or commonsense reasoning, all of which draw conclusions from structured evidence and background knowledge. Other works reinforce this distinction: explicit commonsense layers improve the adaptability and explainability of learned driving systems~\cite{Kimbrell2025CommonsenseRA}; abductive reasoning enables vehicles to maintain hypotheses about objects hidden by occlusion~\cite{Suchan2019OutOS}; and combining neural induction with symbolic deduction produces safer and more traceable planning decisions~\cite{Wei2026ANF}. These works suggest that neuro-symbolic reasoning is a natural requirement for autonomous driving rather than an optional interpretability layer, in line with how the human mind actually works.

Formal traffic-rule representations provide the foundation for this reasoning. Karimi and Duggirala~\cite{Karimi2020FormalizingTR} demonstrate that rules from the California Driver Handbook can be represented in FOL and ASP and executed in CARLA, establishing the practical value of logic-based traffic rules. Irvine \emph{et al.}~\cite{Irvine2023StructuredNL} motivate structured natural language as a precise and stakeholder-accessible interface for defining rules used in verification and validation. Logical English further connects legal text with executable logic: LLM-assisted rule translation is studied in~\cite{DalPont2025YouTT}, context-sensitive legal and violation reasoning is developed in~\cite{Sartor2025ALL}, and a modular Logical-English--Prolog multi-agent framework is presented in~\cite{Sartor2025MindTG}. Collectively, these works demonstrate the importance of formal traffic rules for machine interpretability, human auditability, regulatory compliance, high-level human-like reasoning and the standardization of vehicle behavior.

LLMs provide a flexible interface between natural-language descriptions and autonomous-driving decisions while retaining the strong representational and function-approximation capabilities of deep neural networks. The framework in~\cite{Carvalho2025LLMPoweredFF} highlights their ability to interpret traffic information and produce human-understandable, context-dependent guidance from natural-language scene descriptions. These capabilities make LLMs naturally suited to the neural component of neuro-symbolic reasoning for autonomous driving. The framework in ~\cite{Cui2025CoopReflectTN} use LLMs to convert natural language scene descriptions into vehicle actions. 

Building on these insights, we develop a goal-oriented semantic communication system for LLM-guided neuro-symbolic reasoning in collaborative autonomous driving. Each CAV converts its partial scene description into grounded FOL evidence and selects a task-relevant subset for uplink transmission. The RSU fuses the received evidence, evaluates the traffic-rule set, and derives collaborative safety and right-of-way conclusions. A downlink semantic encoder then selects the deductions to send to corresponding vehicle. The receiving vehicle combines these deductions with its local description to select a high-level action, which is subsequently translated into low-level controls. Thus, the proposed architecture integrates neural processing, formal reasoning, and bandwidth-constrained communication within a single interpretable pipeline.

The cornerstone of the semantic communication system developed in this paper is logical probability. Inspired by the works of Carnap~\cite{c1, c2, carnap_logical_1962} and Hintikka~\cite{c10, Hintikka_combined, c9, Niiniluoto2011TheDO}, we introduce a random-support Dirichlet--Categorical model over the Q-sentences of $\mathcal{L}$, providing a modern statistical interpretation of their works in ~\cite{carnap_logical_1962, c1, c10, c9}. Its support layer represents which Q-sentence types may occur, whereas its frequency layer represents their occurrence frequencies. The model induces logical probabilities for state-descriptions, generalizations, singular formulas, and partial observations. From these probabilities, we define Carnap-style content information, semantic content entropy, and the change in semantic content entropy produced by evidence.

We then formulate goal-oriented semantic communication as a semantic information bottleneck. The encoder selects evidence that maximally reduces uncertainty over the task goals while restricting the use of bandwidth. On the uplink, this criterion prioritizes the observations needed by the RSU to evaluate safety and traffic rules; on the downlink, it prioritizes the conclusions and advisories needed by each vehicle. The framework therefore provides a formal explanation of both what is transmitted and why it is relevant to the driving task.

We evaluate the proposed method on the UCLA MDrive benchmark in CARLA~\cite{Coscoy2026MDriveBC, Dosovitskiy2017CARLAAO}, including occlusion-driven pre-crash scenarios, an unprotected left turn, an intersection deadlock, a highway on-ramp merge, a roundabout navigation, and two minor roads entering a major roads at unsignallized junctions. Under identical communication budgets, semantic selection completes all ten matched scenarios without a collision, whereas uniform information selection causes at least one collision in every scenario and 16 collisions in total. The results show that safe collaborative reasoning depends not only on the availability of formal rules, but also on the reliable communication of the logical evidence required to apply them.

\section{Semantic Communication for Collaborative Perception in Autonomous Driving}
\label{sec:collaborative-perception}
\begin{figure*}[!t]
    \centering
    \includegraphics[width=\textwidth]{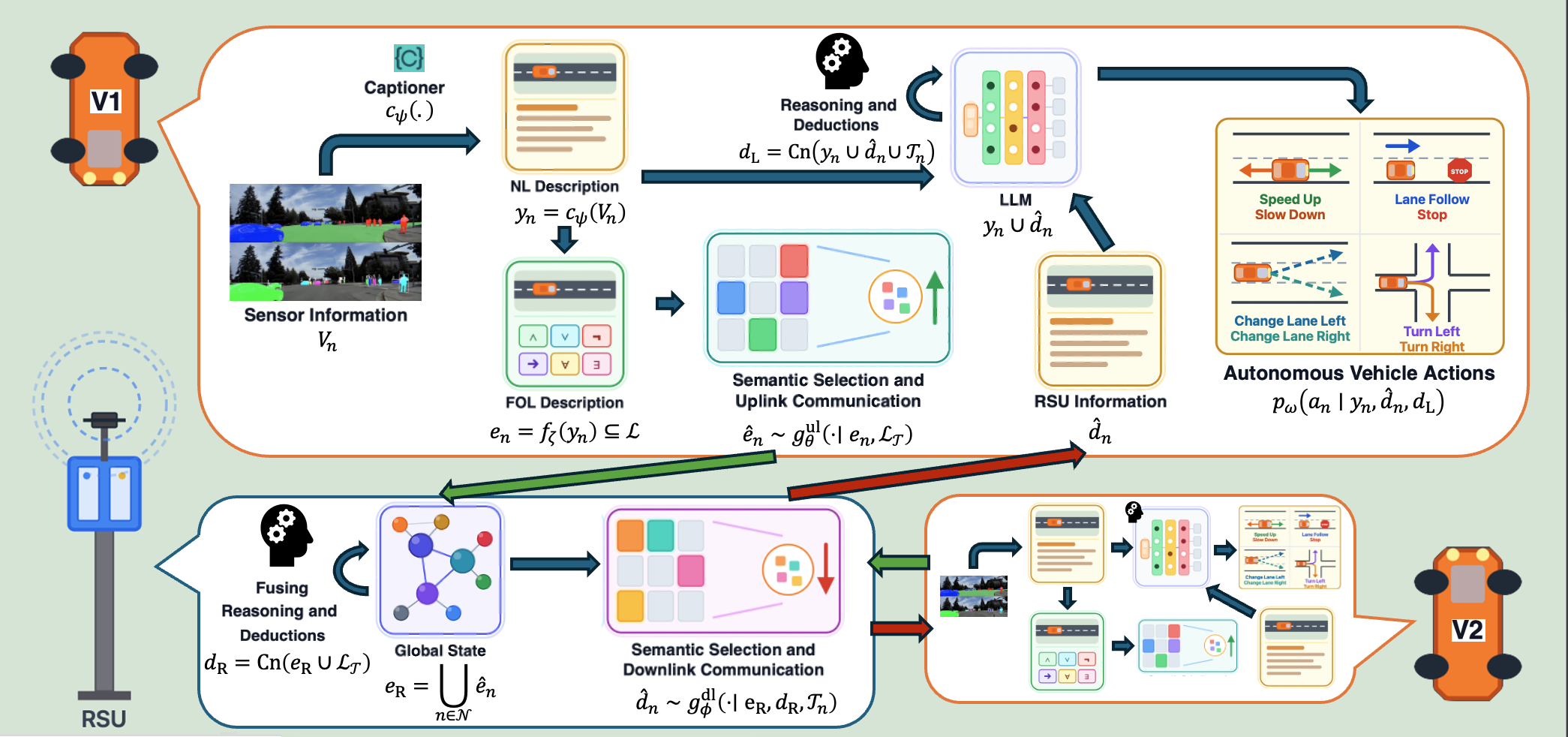}
    \caption{Goal-oriented semantic communication framework for collaborative autonomous driving.}
    \label{fig:wide}
\end{figure*}
We consider $N$ connected autonomous vehicles (CAVs),
$\mathcal{N}=\{1,\ldots,N\}$, and a roadside unit (RSU), as shown in
Fig.~\ref{fig:wide}. Vehicle $n$ observes a partial local perceptual state
$V_n$ using onboard sensors such as cameras, LiDAR, radar, and GPS.
Occlusions, limited sensing range, and viewpoint differences prevent any
single vehicle from observing the complete traffic scene.

To separate semantic communication from low-level perception errors, a captioning interface maps the local perceptual state $V_n$ to a natural-language scene description $ y_n=c_{\psi}(V_n), $ where $c_{\psi}$ denotes the captioner and $y_n$ is its output. Following the abstraction adopted in~\cite{Cui2025CoopReflectTN}, the caption describes observable properties of the ego vehicle, nearby pedestrians, lane geometry, traffic conditions, other vehicles and their states, and other relevant scene attributes, while preserving the partial observability imposed by the vehicle's line of sight. This allows the communication and reasoning components to be evaluated independently of a particular object detector or vision-language perception model. It also provides a common textual representation that can be translated into First-Order Logic (FOL) for semantic communication  and supplied directly to a large language model (LLM) for high-level decision-making. A logic encoder then converts the observation $y_n$ into a corresponding set of local FOL evidences, $ e_n=f_{\zeta}(y_n)\subseteq\mathcal{L}$, where $\Language$ is the FOL language.
An uplink semantic encoder selects task-relevant formulas under the
communication budget:
\begin{equation}
\label{eq:uplink-selection}
\hat e_n
\sim
g_{\theta}^{\mathrm{ul}}
\bigl(\cdot\mid e_n,\mathcal{L}_{\mathcal T}\bigr)
\quad
\text{s.t.}
\quad
\hat e_n\subseteq e_n,
\qquad
\ell(\hat e_n)\leq B_n^{\mathrm{ul}},
\end{equation}
where $\mathcal{L}_{\mathcal T}$ is the set of traffic rules in the form of FOL implication statements,
$\ell(\cdot)$ denotes message length (e.g., number of FOL predicate-entity pairs, numbers of bits, etc.), and $B_n^{\mathrm{ul}}$ is the
uplink budget. The encoder therefore prioritizes formulas relevant to the
driving task rather than transmitting the complete local evidence.

The RSU aggregates the received uplink messages from CAVs into
$
    e_{\mathrm{R}}
    =
    \bigcup_{n\in\mathcal{N}}\hat e_n.
$
The RSU combines this evidence with the task-rule set $\mathcal{L}_{\mathcal{T}}$ and computes the corresponding deductive closure
$
    d_{\mathrm{R}}
    =
    \operatorname{Cn}
    \bigl(e_{\mathrm{R}}\cup\mathcal{L}_{\mathcal{T}}\bigr),
$
where $\operatorname{Cn}(\cdot)$ indicates the set of logical consequences derivable from its argument. That is, the RSU evaluates
the rules whose premises are distributed across vehicles and
derive safety-critical conclusions concerning pedestrians, collisions, hazard warnings, or right-of-way relations. However, those rules with premises supported entirely by a local view may instead be
evaluated locally by CAVs. For each vehicle, the RSU selects an CAV-specific subset of its deductions:
\begin{equation}
\label{eq:downlink-selection}
\hat d_n
\sim
g_{\phi}^{\mathrm{dl}}
\bigl(\cdot\mid e_{\mathrm R}, d_{\mathrm R},\mathcal{T}_n\bigr),
\quad
\text{s.t.}
\quad
\hat d_n\subseteq d_{\mathrm R},
\qquad
\ell(\hat d_n)\leq B_n^{\mathrm{dl}},
\end{equation}
where $\mathcal{T}_n$ is the set of FOL statements related to vehicle $n$'s task and
$B_n^{\mathrm{dl}}$ is its downlink budget. Let
$
\mathcal{J}
$
denote the set of available high-level actions. Based on the local scene
description $y_n$ the vehicle-specific deductions $\hat d_n$ received
from the RSU, and local reasoning $
    d_{\mathrm{L}}
    =
    \operatorname{Cn}
    \bigl(y_n\cup\hat d_n\cup\mathcal{L}_{\mathcal{T}}\bigr),
$ the LLM selects
$
p_{\omega}\!\left(a_n\mid y_n,\hat d_n, d_{\mathrm{L}}\right),
$
where $p_{\omega}$ is a mapping from observations to actions. A PID
controller then translates the selected action $a_n$ into low-level
steering, throttle, and braking.



\section{FOL Representation of the World}
\label{sec:fol-representation}

We represent the traffic environment using a function-free First-Order Logic
(FOL) language $\mathcal{L}$ \cite{nelte}. The vocabulary of
$\mathcal{L}$ consists of a finite set of monadic logic predicates
$\mathcal{P}_1$; a finite set of dyadic logic predicates
$\mathcal{P}_2$; a countably infinite set of variables $\mathcal{X}$; the
logical connectives of negation $\neg$, logical and $\land$, logical or $\lor$, logical implication $\rightarrow$, and logical entailment $\models$; and existential quantifier
$\exists$ and universal quantifier $\forall$. For example,
$\mathit{IsPedestrian}(x_1)$ is a monadic predicate that evaluates to either true or false, depending on whether entity assigned to $x_1$ is truly pedestrian or not, whereas
$\mathit{LeftOf}(x_1,x_2)$ and $\mathit{InIntersection}(x_1,x_2)$ are dyadic
predicates, and $x_1$ and $x_2$ are variables of the predicates.

The traffic rule set $\mathcal{L}_{\mathcal{T}}$ is a finite collection of logical implication formulas
in $\mathcal{L}$. These formulas encode traffic regulations, safety
requirements, and task-specific decision rules. For example,
$
\forall x_1\;(
    \mathit{IsVehicle}(x_1)
    \land
    \exists x_2[
        \mathit{IsPedestrian}(x_2)
        \land
        \mathit{InIntersection}(x_1,x_2)
    ]
    \rightarrow
    \mathit{Stop}(x_1)
)
$
states that a vehicle $x_1$ must stop whenever a pedestrian $x_2$ occupies an
intersection relevant to its path.

In the collaborative-perception framework, the evidence set $e_n$ of
vehicle $n$ contains grounded predicates of $\mathcal{L}$. For example,
$
    e_n
    =
    \{\xi_{n,1},\ldots,\xi_{n,3}\} =
    \{\mathit{IsPedestrian}(p_1),
    \mathit{InIntersection}(v_1,p_1),
    \mathit{LeftOf}(t_1,v_2)\}
$
may encode that entity $p_1$ is a pedestrian, that it lies in the relevant
intersection for vehicle $v_1$, and that truck $t_1$ is in the left of vehicle $v_2$.

\subsection{Q-Sentences}

To construct a finite semantic basis for sentences in $\Language$, we
consider the complete truth assignments to a fixed collection of predicate
occurrences involving two distinct individuals. Let $x_1,x_2\in\mathcal{X}$
and define the set of predicate slots
\begin{equation}
\label{eq:predicate-slots}
\begin{aligned}
\mathcal{A}(x_1,x_2)
={}&
\{P(x_1),P(x_2):P\in\mathcal{P}_1\}
\\
&\cup
\{P(x_1,x_2),P(x_2,x_1):P\in\mathcal{P}_2\}.
\end{aligned}
\end{equation}
Thus, a predicate slot is a particular predicate applied to a particular
ordered tuple of variables. Let
$
    T
    :=
    |\mathcal{A}(x_1,x_2)|
    =
    2|\mathcal{P}_1|+2|\mathcal{P}_2|,
$
and enumerate these slots as
$\mathcal{A}(x_1,x_2)=\{A_1,\ldots,A_T\}$.

\begin{definition}[Q-Sentence\cite{c1, c10}]
\label{def:q-sentence}
For each binary configuration
$\boldsymbol{\delta}_i=(\delta_{i,1},\ldots,\delta_{i,T})
\in\{0,1\}^{T}$, define
\begin{equation}
\label{eq:q-sentence}
    Q_i(x_1,x_2)
    :=
    \bigwedge_{t=1}^{T}
    A_t^{\delta_{i,t}},
\end{equation}
where $ A_t^{\delta_{i,t}}$ is $A_t$, if $\delta_{i,t}=1$, and $\neg A_t$ if $\delta_{i,t}=0$. The formula $Q_i(x_1,x_2)$ is called a \emph{Q-sentence}. It specifies a
complete truth assignment to every predicate slot in
$\mathcal{A}(x_1,x_2)$.
\end{definition}

We denote the set of all Q-sentences by
$
    \QSent
    :=
    \{Q_1,\ldots,Q_K\},
    K:=|\QSent|=2^{T}.
$
The index $i$ may therefore be identified with the binary configuration
$\boldsymbol{\delta}_i$. A Q-sentence describes complete state
of an ordered pair. For example, suppose the language contains the monadic predicate
$\mathit{IsVehicle}$ and the dyadic predicates $\mathit{LeftOf}$ and
$\mathit{Approaching}$, and no other predicates. One possible Q-sentence includes
$
Q_i(x_1,x_2)
={}
\mathit{IsVehicle}(x_1)
\land
\mathit{IsVehicle}(x_2)
\land
\mathit{LeftOf}(x_1,x_2)
\land
\neg\mathit{LeftOf}(x_2,x_1)
\land
\mathit{Approaching}(x_1,x_2)
\land
\mathit{Approaching}(x_2,x_1).
$

\begin{proposition}[Q-Sentence Partition]
\label{prop:q-partition}
For any assignment of two entities (i.e., individuals) $v_1$ and $v_2$ to variables $x_1$ and $x_2$, exactly one Q-sentence in $\QSent$ is satisfied.
Equivalently,
\begin{align}
    \bigvee_{i=1}^{K}Q_i(v_1,v_2)
    &\equiv \top,
    \label{eq:q-exhaustive}\\
    Q_i(v_1,v_2)\land Q_j(v_1,v_2)
    &\equiv \bot,
    \qquad i\neq j.
    \label{eq:q-exclusive}
\end{align}

That particular assignment of entities to variables instantiates that Q-sentence.
\end{proposition}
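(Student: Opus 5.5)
The plan is to reduce the statement to an elementary fact about binary vectors. Fix an interpretation (a structure) and the two entities $v_1,v_2$ assigned to $x_1,x_2$. Every predicate slot $A_t\in\mathcal{A}(x_1,x_2)$ from \eqref{eq:predicate-slots} becomes a ground atom, for example $P(v_1)$ or $P(v_2,v_1)$. Under the interpretation each such atom has a definite truth value. Collecting these values gives a vector
\[
\boldsymbol{\tau}=(\tau_1,\ldots,\tau_T)\in\{0,1\}^T,
\qquad
\tau_t=1 \iff A_t(v_1,v_2)\ \text{is true}.
\]
The key observation is a lemma obtained by unfolding Definition~\ref{def:q-sentence}: $Q_i(v_1,v_2)$ is a conjunction of literals, so it is true iff every literal $A_t^{\delta_{i,t}}$ is true. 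That happens iff $\delta_{i,t}=\tau_t$ for all $t$, i.e.\ iff $\boldsymbol{\delta}_i=\boldsymbol{\tau}$.

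Exhaustiveness \eqref{eq:q-exhaustive} follows directly. The map $i\mapsto\boldsymbol{\delta}_i$ enumerates all of $\{0,1\}^T$, since $K=2^T$ and the index is identified with the configuration. Hence some $i^\ast$ has $\boldsymbol{\delta}_{i^\ast}=\boldsymbol{\tau}$. By the lemma $Q_{i^\ast}(v_1,v_2)$ holds, so the disjunction holds. Because the interpretation was arbitrary, the disjunction is valid, i.e.\ equivalent to $\top$.

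Exclusivity \eqref{eq:q-exclusive} is handled next. If $i\neq j$, then $\boldsymbol{\delta}_i\neq\boldsymbol{\delta}_j$, so there is some $t$ with $\delta_{i,t}\neq\delta_{j,t}$. Thus one of the two conjunctions contains $A_t$ and the other contains $\neg A_t$. Their conjunction therefore contains a complementary pair of literals and is unsatisfiable, i.e.\ equivalent to $\bot$. Exhaustiveness and exclusivity together give "exactly one", and the unique satisfied $Q_{i^\ast}$ is the one instantiated by the assignment.

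I expect the only delicate point to be the case $v_1=v_2$, or slots that coincide syntactically. If the two entities are the same, some slots collapse: $P(x_1)$ and $P(x_2)$ become the same atom, and $P(x_1,x_2)$ and $P(x_2,x_1)$ do too. The vector $\boldsymbol{\tau}$ is still well defined, since each slot simply receives the truth value of its ground atom. The argument above never needs the slots to be independent. Exhaustiveness uses only existence of the matching vector, and exclusivity uses only a syntactic complementary pair. So the proof goes through verbatim. The collapse only means that some $Q_i$ are unsatisfiable on such pairs, which does not affect the claim. I would note this explicitly, since the proposition's phrase "two entities" need not mean distinct ones.
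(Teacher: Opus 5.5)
Your proposal is correct and follows the same route as the paper's proof. The truth values of the $T$ predicate slots under the fixed model and assignment determine a unique vector in $\{0,1\}^T$, hence a unique satisfied Q-sentence, and two distinct Q-sentences disagree on some slot and so cannot both hold; you spell out the matching lemma ($Q_i(v_1,v_2)$ holds iff $\boldsymbol{\delta}_i=\boldsymbol{\tau}$) more explicitly and add a correct remark on the case $v_1=v_2$, which the paper does not discuss but which, as you note, leaves the argument unchanged.
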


\begin{proof}
Under a fixed model and valuation, each predicate slot $A_t$ is either true
or false. These truth values determine a unique binary vector in
$\{0,1\}^{T}$ and hence a unique Q-sentence. Any two distinct Q-sentences
differ in at least one predicate slot and therefore cannot both be true.
\end{proof}

The Q-sentences provide a finite semantic basis for formulas over the
predicate slots. Thus,

\begin{proposition}[Disjunctive Normal Form over Q-Sentences]
\label{dnfinfol}
Let $\varphi(x_1,x_2)$ be a formula constructed from the predicate slots in
$\mathcal{A}(x_1,x_2)$. Then
\begin{equation}
\label{eq:q-dnf}
\varphi(x_1,x_2)
\equiv
\bigvee_{\substack{
Q_i\in\QSent\\
Q_i\models\varphi
}}
Q_i(x_1,x_2).
\end{equation}
\end{proposition}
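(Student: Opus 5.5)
The argument is a semantic one: fix an arbitrary interpretation together with an assignment of individuals $v_1,v_2$ to $x_1,x_2$, and show that both sides of \eqref{eq:q-dnf} receive the same truth value. Since $\varphi$ is built by the Boolean connectives from the predicate slots $A_1,\ldots,A_T$, its truth value under such a valuation depends only on the vector of truth values of the slots. That vector is a point $\boldsymbol{\delta}\in\{0,1\}^T$. By Proposition~\ref{prop:q-partition}, it determines a unique Q-sentence $Q_{i^\ast}$ which is true, and every other $Q_j$ is false. The statement then reduces to checking that $\varphi$ is true exactly when $Q_{i^\ast}\models\varphi$.

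The key steps, in order, are as follows.

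\emph{Step 1 (truth-functionality).} By induction on the construction of $\varphi$ from slots using $\neg,\land,\lor,\rightarrow$, I would show that the truth value of $\varphi$ is a Boolean function $F_\varphi(\boldsymbol{\delta})$ of the slot vector.

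\emph{Step 2 (characterising $Q_i\models\varphi$).} Since $Q_i$ pins every slot to $\boldsymbol{\delta}_i$, every valuation satisfying $Q_i$ has slot vector $\boldsymbol{\delta}_i$. Hence $Q_i\models\varphi$ if and only if $F_\varphi(\boldsymbol{\delta}_i)=1$. For the converse direction of this equivalence, $Q_i$ is satisfiable: one can build a model realising any slot vector, because the slot atoms are distinct atomic formulas.

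\emph{Step 3 (the two inclusions).} If $\varphi$ is true under the valuation, then $F_\varphi(\boldsymbol{\delta}_{i^\ast})=1$. So $Q_{i^\ast}$ appears in the disjunction, and the right-hand side is true because $Q_{i^\ast}$ holds. Conversely, if the right-hand side is true, some disjunct $Q_i$ with $Q_i\models\varphi$ is true. By exclusivity \eqref{eq:q-exclusive} this forces $i=i^\ast$, and $Q_{i^\ast}\models\varphi$ gives that $\varphi$ is true. The empty disjunction is read as $\bot$, which covers unsatisfiable $\varphi$.

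The main obstacle is interpretive rather than technical: the phrase ``constructed from the predicate slots'' must be read as quantifier-free, propositional combinations of the $A_t$. Step 1 can fail if $\varphi$ contains quantifiers or atoms outside $\mathcal{A}(x_1,x_2)$, such as $P(x_1,x_1)$. A further subtlety arises if $x_1$ and $x_2$ are allowed to denote the same individual: then $P(x_1,x_2)$ and $P(x_2,x_1)$ may be constrained together, and Step 2's satisfiability claim for every $Q_i$ weakens. I would make the quantifier-free reading explicit. I would also note that Step 3 never actually needs satisfiability of each $Q_i$ (unsatisfiable disjuncts are harmless), so the equivalence survives even in that degenerate case.
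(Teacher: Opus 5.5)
Your proposal is correct and follows the same route as the paper. You fix a valuation, use Proposition~\ref{prop:q-partition} to isolate the unique true Q-sentence, and check that $\varphi$ holds exactly when that Q-sentence entails it; your Steps 1--3 just make explicit the truth-functionality and the characterisation of $Q_i\models\varphi$ that the paper's two-line proof leaves implicit, and in the converse of Step 3 you don't need exclusivity, since a true disjunct $Q_i$ with $Q_i\models\varphi$ already makes $\varphi$ true.
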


\begin{proof}
By Proposition~\ref{prop:q-partition}, every assignment to
$(x_1,x_2)$ instantiates exactly one Q-sentence. Formula $\varphi$ is therefore
satisfied precisely by the disjunction of the Q-sentences that entail it.
\end{proof}

\subsection{State-Descriptions and Partial Observations}

For a finite collection of observed ordered pairs of entities
$
    \{(v_{t,1},v_{t,2})\}_{t=1}^{M},
$
a complete state-description instantiates exactly one Q-sentence for each pair:

\begin{definition}[State-Description]
\label{def:state-description}
A state-description over $ \{(v_{t,1},v_{t,2})\}_{t=1}^{M}$ is a conjunction
\begin{equation}
\label{eq:state-description}
    \mathrm{SD_j}
    =
    \bigwedge_{t=1}^{M}
    Q_{X_t}(v_{t,1},v_{t,2}),
\end{equation}
where $X_t\in\{1,\ldots,K\}$ denotes the index of the Q-sentence instantiated
by the $t$-th ordered pair. Each different realization
$\boldsymbol{x}^{(j)}=(x_1^{(j)},\ldots,x_M^{(j)})$ of
$(X_1,\ldots,X_M)$ induces one of the $K^M$ possible complete
state-descriptions.
\end{definition}

If Q-sentence $Q_i$ occurs
$m_i$ times in a state-description, then
$
    m_i
    :=
    \sum_{t=1}^{M}\mathbf{1}\{X_t=i\},
    \sum_{i=1}^{K}m_i=M,
$
where $\mathbf{1}\{\cdot\}$ is the indicator function.
In practice, a vehicle rarely observes every predicate slot. A formula
$\xi(x_1,x_2)$ may determine only a subset of the predicate literals in a Q-sentence
and leave the remaining slots unspecified. We associate such a partial
observation with its compatible Q-sentence set
\begin{equation}
\label{eq:observation-cube}
    B(\xi)
    :=
    \left\{
        i\in\{1,\ldots,K\}:
        Q_i(x_1,x_2)\models\xi(x_1,x_2)
    \right\}.
\end{equation}
We refer to $B(\xi)$ as the \emph{observation cube} induced by $\xi$. A
complete observation corresponds to $|B(\xi)|=1$, whereas an incomplete
observation generally satisfies $|B(\xi)|>1$.

For example, observing only
$\mathit{IsPedestrian}(p_1)$ rules out every Q-sentence containing
$\neg\mathit{IsPedestrian}(p_1)$ but leaves all other predicate slots
undetermined. The corresponding cube therefore contains
$2^{T-1}$ Q-sentences. Accordingly, the logical evidence available to vehicle $n$ may be written as
$
    e_n
    =
    \{\xi_{n,1},\ldots,\xi_{n,L_n}\},
$
or, equivalently, as the collection of compatible sets
$
    \mathcal{B}_n
    =
    \{B(\xi_{n,1}),\ldots,B(\xi_{n,L_n})\}.
$

\subsection{Generalizations}

A state-description $\mathrm{SD}_j$ specifies the exact Q-sentence assigned
to every observed pair. We also consider \emph{generalizations} that retain
only the set
$\mathcal{S}_j\subseteq\{1,\ldots,K\}$
of Q-sentence types permitted to occur, without fixing their assignments or
frequencies. We call $\mathcal{S}$ the \emph{support} of the generalization
and denote its width by $w=|\mathcal{S}|$. 

\begin{definition}[Generalization] 
\label{def:generalization} 
For a subset $\mathcal{S}_j\subseteq\{1,\ldots,K\}$, define 
\begin{equation} 
\label{eq:generalization} 
g_j:={} \bigwedge_{i\in\mathcal{S}_j} \exists (x, y)\, Q_i(x,y) \land \bigwedge_{i\notin\mathcal{S}_j} \neg\exists (x, y)\, Q_i(x,y). 
\end{equation} 
Different supports $\mathcal{S}_j$ induce different generalizations.
\end{definition}
Thus, a generalization specifies
which Q-sentence types are possible, while a state-description specifies
how those types are instantiated in finite evidence. This distinction
motivates the support and frequency layers of the random-support
Dirichlet--Categorical model. 
\begin{remark}
    Apart from the generalizations defined above, other universal and existential generalizations can also be formulated in $\Language$. Since the
Q-sentences form a basis for $\Language$ as per Proposition~\ref{dnfinfol} and partition the logical state space
according to Proposition~\ref{prop:q-partition}, each such generalization can
be expressed as a disjunction of the generalizations defined in
\eqref{eq:generalization}.
\end{remark}

\subsection{Inductive Logical Probabilities}

Let $\ProbMeas$ be an inductive logical probability measure defined over $\Language$. Then, the inductive logical probability, a.k.a., the degree of confirmation, is defined as:

\begin{definition}[Degree of Confirmation {\cite{c2}}]
\label{degofconf}
For a FOL sentence $\varphi\in\Language$ and evidence $e\in\Language$ with
$c(e)>0$, the \emph{degree of confirmation} of $\varphi$ by $e$ is
\begin{equation}
\label{eq:degree-confirmation}
c(\varphi\mid e)
:=
\frac{c(\varphi\land e)}{c(e)}.
\end{equation}
\end{definition}

Thus, $c(\varphi\mid e)$ is the conditional inductive logical probability
that $\varphi$ holds given $e$. The evidence-free form is understood as
\begin{equation}
\label{eq:prior-confirmation}
c_0(\varphi)
:=
c(\varphi\mid\top)
=
c(\varphi),
\end{equation}
where $\top$ denotes tautological evidence. 

\begin{remark}

We use $c(\cdot)$ for inductive logical probability, which measures the support one formula receives from another within a specified language and inductive method. Unlike statistical probability, it concerns logical support rather than event frequency, although the model's frequency layer connects the two notions through Q-sentence frequencies.
\end{remark}

Since the
Q-sentences form a basis for $\Language$ as per Proposition~\ref{dnfinfol} and partition the logical state space
according to Proposition~\ref{prop:q-partition}, for a general or singular formula $\varphi$ represented by
compatible Q-sentences, its inductive logical probability is
\begin{equation}
\label{eq:formula-probability}
c(\varphi)
=
\sum_{\substack{
Q_i\in\QSent\\
Q_i\models\varphi
}}
c(Q_i),
\end{equation}
where the sum is valid because the Q-sentences are mutually exclusive. Next, we present a suitable construction of an inductive probability measure $\ProbMeas$ over $\Language$, which we will need to define and quantify semantic information and entropy.

\section{A Random-Support Dirichlet--Categorical Framework for Logical
Probabilities}
\label{sec:rsdc}

In this section, we reinterpret the Carnap--Hintikka inductive systems through a two-layer
generative model over the Q-sentences of $\Language$. Inspired by
Hintikka's constituents, the \emph{support layer} determines which
Q-sentence types may occur; the \emph{frequency layer} models their
frequencies through a Dirichlet--Categorical representation of Carnap's
$\lambda$-continuum. Together, the layers define logical probabilities for
both general and singular formulas while distinguishing a type's existence
from its frequency conditional on existence.

\subsection{The Frequency Layer: Carnap's $\lambda$-System}
\label{subsec:rsdc-frequency}

We first assume full support $\mathcal{S}_K$, so that all $K$ Q-sentences are permitted.
Support uncertainty is introduced in Sec.~\ref{subsec:rsdc-support}.

\begin{definition}[Symmetric Dirichlet--Categorical model]
\label{def:rsdc-dircat}
Let
\[
\boldsymbol{\theta}=(\theta_1,\ldots,\theta_K)
\in
\Delta^{K-1}
:=
\left\{
\boldsymbol{\theta}:
\theta_i\geq0,\ 
\sum_{i=1}^{K}\theta_i=1
\right\}
\]
denote the Q-sentence frequency vector, and let
$X_t\in\{1,\ldots,K\}$ be the Q-sentence index of the $t$-th observed
ordered pair. For $\lambda>0$,
\begin{equation}
\label{eq:rsdc-dircat}
\boldsymbol{\theta}
\sim
\operatorname{Dir}\!\left(
\frac{\lambda}{K},\ldots,\frac{\lambda}{K}
\right),
\;\; X_1,X_2,\ldots\mid\boldsymbol{\theta}
\overset{\mathrm{iid}}{\sim}
\operatorname{Categorical}(\boldsymbol{\theta}).
\end{equation}
\end{definition}

The prior is symmetric as each Q-sentence receives the same parameter
$\lambda/K$, ensuring no Q-sentence type is privileged a priori. We first recall the standard Dirichlet integral
used to obtain the induced probability of state-description by marginalizing
$\boldsymbol{\theta}$.

\begin{lemma}[Dirichlet integral]
\label{lem:rsdc-beta}
For $\beta_i>0$ and
$\bar{\beta}:=\sum_{i=1}^{K}\beta_i$, one can show that
\begin{equation}
\label{eq:rsdc-betaint}
\int_{\Delta^{K-1}}
\prod_{i=1}^{K}\theta_i^{\beta_i-1}
\,d\boldsymbol{\theta}
=
\frac{\prod_{i=1}^{K}\Gamma(\beta_i)}
     {\Gamma(\bar{\beta})}.
\end{equation}
\end{lemma}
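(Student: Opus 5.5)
We will prove the Dirichlet integral by induction on $K$. The measure $d\boldsymbol{\theta}$ is Lebesgue measure on the $(K-1)$-dimensional simplex, parametrized by $(\theta_1,\ldots,\theta_{K-1})$ with $\theta_K=1-\sum_{i<K}\theta_i$. The only analytic input is the Euler Beta integral
\[
\int_0^1 u^{a-1}(1-u)^{b-1}\,du=\frac{\Gamma(a)\Gamma(b)}{\Gamma(a+b)},\qquad a,b>0 .
\]
We either take it as standard or derive it first.

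To derive the Beta identity, write $\Gamma(a)\Gamma(b)=\int_0^\infty\!\int_0^\infty s^{a-1}t^{b-1}e^{-(s+t)}\,ds\,dt$. Substitute $s=ru$ and $t=r(1-u)$ with $r>0$, $u\in(0,1)$; the Jacobian is $r$. By Tonelli, since the integrand is nonnegative, the double integral factors as $\int_0^\infty r^{a+b-1}e^{-r}\,dr\cdot\int_0^1 u^{a-1}(1-u)^{b-1}\,du$. This equals $\Gamma(a+b)\,B(a,b)$, which gives the identity.

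For $K=1$ the simplex is a point and both sides equal $1$. The base case proper is $K=2$, which is exactly the Beta identity. For the inductive step, assume the formula holds for $K-1$ components, and let $I_K(\beta_1,\ldots,\beta_K)$ denote the left-hand side.

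Isolate $\theta_K=u\in(0,1)$. On the slice, the remaining coordinates satisfy $\sum_{i<K}\theta_i=1-u$. Rescale them as $\theta_i=(1-u)\eta_i$ with $\boldsymbol{\eta}\in\Delta^{K-2}$. This map has Jacobian $(1-u)^{K-2}$ in the $K-2$ free coordinates.

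The integrand becomes $u^{\beta_K-1}(1-u)^{\sum_{i<K}(\beta_i-1)}\prod_{i<K}\eta_i^{\beta_i-1}$. Collecting powers of $(1-u)$ gives exponent $\sum_{i<K}\beta_i-(K-1)+(K-2)=\bar\beta-\beta_K-1$. Fubini/Tonelli, justified again by nonnegativity, then yields
\[
I_K=\int_0^1 u^{\beta_K-1}(1-u)^{\bar\beta-\beta_K-1}\,du\cdot I_{K-1}(\beta_1,\ldots,\beta_{K-1}).
\]
By the Beta identity and the induction hypothesis, this equals
\[
\frac{\Gamma(\beta_K)\Gamma(\bar\beta-\beta_K)}{\Gamma(\bar\beta)}\cdot\frac{\prod_{i<K}\Gamma(\beta_i)}{\Gamma(\bar\beta-\beta_K)}.
\]
The factor $\Gamma(\bar\beta-\beta_K)$ cancels, leaving $\prod_{i=1}^K\Gamma(\beta_i)/\Gamma(\bar\beta)$.

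The main obstacle is bookkeeping rather than conceptual. We must fix the convention for $d\boldsymbol{\theta}$ as Lebesgue measure on the first $K-1$ coordinates, and we must get the Jacobian exponent $K-2$ right so that the powers of $(1-u)$ combine to $\bar\beta-\beta_K-1$. Finiteness of each integral follows from $\beta_i>0$, since each one-dimensional factor $\int_0^1 u^{\beta-1}\,du$ converges. This finiteness is what licenses the use of Tonelli throughout.
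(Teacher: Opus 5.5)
Your proof is correct, and the bookkeeping you flag checks out. Your factor $(1-u)^{K-2}$ is the full Jacobian of the change of variables $(\theta_1,\dots,\theta_{K-1})\mapsto(\eta_1,\dots,\eta_{K-2},u)$, not just a Jacobian "in the $K-2$ free coordinates". After adding the first $K-2$ rows of the Jacobian matrix to the last, the matrix becomes block triangular with determinant $-(1-u)^{K-2}$. This legitimizes isolating the dependent coordinate $\theta_K$, and the exponent $\bar\beta-\beta_K-1$ comes out right.

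The paper does not prove this lemma at all. It merely ``recalls'' the standard Dirichlet integral and applies it with $\beta_i=m_i+\lambda/K$ to obtain \eqref{eq:rsdc-mk}, so your argument supplies what the paper takes for granted.

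For comparison, the usual textbook route is a one-shot $K$-dimensional version of your Beta derivation:
\begin{itemize}
\item Write $\prod_{i=1}^K\Gamma(\beta_i)=\int_{(0,\infty)^K}\prod_i s_i^{\beta_i-1}e^{-\sum_i s_i}\,d\mathbf{s}$.
\item Substitute $s_i=r\theta_i$ with $r=\sum_i s_i>0$ and $\boldsymbol{\theta}\in\Delta^{K-1}$. The Jacobian is $r^{K-1}$, and the integrand becomes $r^{\bar\beta-1}e^{-r}\prod_i\theta_i^{\beta_i-1}$.
\item Tonelli then gives $\Gamma(\bar\beta)$ times the left-hand side of \eqref{eq:rsdc-betaint}, with no induction.
\end{itemize}
Your Beta computation is exactly the case $K=2$ of this, so you could bypass the induction entirely. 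What your inductive route buys is that each step needs only a one-dimensional Beta integral and a rescaling, and it makes the stick-breaking (aggregation) structure of the Dirichlet visible.

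Two minor remarks. First, Tonelli requires only nonnegativity and measurability, not finiteness, so finiteness of $I_K$ is an output of the factorization rather than what licenses it. Second, your insistence on projected Lebesgue measure for $d\boldsymbol{\theta}$ is genuinely needed: with surface measure on the simplex in $\mathbb{R}^K$, the left-hand side would acquire an extra factor $\sqrt{K}$.
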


For a state-description $\mathrm{SD}_j$ with observation counts
$(m_1,\ldots,m_K)$, conditional independence gives
$
c(\mathrm{SD}_j\mid\boldsymbol{\theta})
=
\prod_{i=1}^{K}\theta_i^{m_i}.
$
Integrating this likelihood under \eqref{eq:rsdc-dircat} and applying
Lemma~\ref{lem:rsdc-beta} with
$\beta_i=m_i+\lambda/K$ yields
\begin{equation}
\label{eq:rsdc-mk}
c(\mathrm{SD}_j)
=
\frac{\Gamma(\lambda)}
     {\Gamma(\lambda+M)}
\prod_{i=1}^{K}
\frac{\Gamma(m_i+\lambda/K)}
     {\Gamma(\lambda/K)}
=
\frac{\prod_{i=1}^{K}(\lambda/K)_{m_i}}
     {(\lambda)_M},
\end{equation}
where
$
(a)_m
:=
a(a+1)\cdots(a+m-1)
=
\frac{\Gamma(a+m)}{\Gamma(a)}
$
is the rising factorial.

\begin{remark}
Equation~\eqref{eq:rsdc-mk} coincides with Carnap's measure
$m_{\lambda}(\cdot)$ over state-descriptions in his $\lambda$-continuum~\cite{c1}. Carnap obtains it from the
sequential rule
\begin{equation}
\label{eq:rsdc-carnaprule}
\Pr(X_t=i\mid X_1,\ldots,X_{t-1})
=
\frac{m_i^{(t-1)}+\lambda/K}
     {(t-1)+\lambda},
\end{equation}
where $m_i^{(t-1)}$ is the number of previous occurrences of $Q_i$.
Multiplication over the sequence produces the denominator
$(\lambda)_M$; the $m_i$ occurrences of type $i$ produce
$(\lambda/K)_{m_i}$. Multiplication over all types therefore recovers
\eqref{eq:rsdc-mk}. Thus, integrating out
$\boldsymbol{\theta}$ is equivalent to multiplying Carnap's predictive
probabilities, and the latent frequency vector $\boldsymbol{\theta}$ provides a modern
statistical representation of his system.
\end{remark}

The Dirichlet representation also makes posterior updating immediate.
Given evidence $e$ with counts $(m_1,\ldots,m_K)$,
\begin{equation}
\label{eq:rsdc-conj}
\boldsymbol{\theta}\mid e
\sim
\operatorname{Dir}\!\left(
m_1+\frac{\lambda}{K},\ldots,
m_K+\frac{\lambda}{K}
\right),
\end{equation}
and hence
\begin{equation}
\label{eq:rsdc-pred}
\Pr(X_{M+1}=i\mid e)
=
\frac{m_i+\lambda/K}{M+\lambda}
=
\frac{M}{M+\lambda}\frac{m_i}{M}
+
\frac{\lambda}{M+\lambda}\frac{1}{K}.
\end{equation}
Predictive probability is therefore a convex combination of the
empirical frequency $m_i/M$ and the symmetric prior $1/K$, with $\lambda$
acting as the total prior pseudo-count. As $M\rightarrow\infty$, the prior
contribution vanishes and the prediction converges to the empirical
frequency.

\subsection{Support Layer and Generalizations}
\label{subsec:rsdc-support}
\label{subsec:rsdc-gen}

The full-support model in Sec.~\ref{subsec:rsdc-frequency} cannot assign
positive probability to a strict universal generalization. For finite
$\lambda>0$, the symmetric Dirichlet distribution satisfies
\begin{equation}
\label{eq:rsdc-interior}
\Pr(\theta_i=0)=0,\qquad \theta_i>0\ \text{a.s.},\quad i=1,\ldots,K.
\end{equation}
Thus, every Q-sentence type occurs almost surely in an infinite sequence,
and any generalization excluding at least one type has probability zero.
For example, $\forall x\,[R(x)\rightarrow B(x)]$ requires the Q-sentence type
containing $R(x)\land\neg B(x)$ never occur (i.e., have zero frequency), which is impossible under
\eqref{eq:rsdc-interior}. We address this limitation by first selecting the
types permitted to occur and then assigning a Dirichlet distribution over
their frequencies.

\begin{definition}[Existence indicators and random support]
\label{def:rsdc-support}
For each Q-sentence index $i\in\{1,\ldots,K\}$, let
\[
    Z_i
    =
    \begin{cases}
        1, & Q_i\text{ is permitted to occur},\\
        0, & Q_i\text{ is ruled out}.
    \end{cases}
\]
The resulting random support is
$
    \mathcal{S}
    :=
    \{i:Z_i=1\}.
$
Conditional on a support of width $w:=|\mathcal{S}|$, the frequency layer is
\begin{equation}
\label{eq:rsdc-hintikka}
(\theta_i)_{i\in\mathcal{S}}
\mid\mathcal{S}
\sim
\operatorname{Dir}\!\left(
    \frac{\lambda}{w},\ldots,\frac{\lambda}{w}
\right),
\qquad
\theta_i=0
\quad\forall i\notin\mathcal{S},
\end{equation}
with
\[
    \Pr(X_t=i\mid\boldsymbol{\theta},\mathcal{S})
    =
    \theta_i.
\]
\end{definition}

Because proper supports $\mathcal{S}\subsetneq\{1,\ldots,K\}$ now receive
positive probability, the model also assigns positive probability to
events of the form $\theta_i=0$ and, therefore, to generalizations that
exclude particular Q-sentence types.

We assign each Q-sentence an existence probability, or \emph{existence
dial}, through
\begin{equation}
\label{eq:rsdc-dials}
Z_i\sim\operatorname{Bern}(\rho_i),
\qquad
\rho_i\sim\operatorname{Beta}(a_i,b_i),
\qquad
\mathbb{E}[\rho_i]
=
\frac{a_i}{a_i+b_i},
\end{equation}
independently across $i$. The resulting model distinguishes two questions:
whether a Q-sentence type can occur, governed by $Z_i$, and how frequently
it occurs conditional on being permitted, governed by $\theta_i$.

\subsection{Probabilities of Generalizations}
\label{subsec:rsdc-gen}

A generalization $g_j$ makes a claim about which Q-sentence types may occur and
hence depends on the support layer $\mathcal{S}_j$.

\begin{definition}[Forbidden set]
\label{def:rsdc-forbidden}
For a generalization $g_j$, define
\begin{equation}
\label{eq:rsdc-forbidden}
\mathcal{F}_{g_j}
:=
\{i:Q_i\text{ is incompatible with }g_j\},
\qquad
f:=|\mathcal{F}_{g_j}|.
\end{equation}
\end{definition}

The generalization is true exactly when its support contains no forbidden
type:
\begin{equation}
\label{eq:rsdc-genmaster}
\models g_j
\iff
\mathcal{S}_j\cap\mathcal{F}_{g_j}=\emptyset,
\qquad
c(g_j)
=
\sum_{\mathcal{S}_j:\,
\mathcal{S}_j\cap\mathcal{F}_{g_j}=\emptyset}
c(\mathcal{S}_j).
\end{equation}
Independence of the existence indicators gives
\begin{equation}
\label{eq:rsdc-genprior}
c(g_j)
=
\prod_{i\in\mathcal{F}_{g_j}}(1-\rho_i)
=
\begin{cases}
(1-\rho)^f, & \rho_i=\rho,\\[3pt]
\dfrac{B(a,b+f)}{B(a,b)},
& \rho\sim\operatorname{Beta}(a,b).
\end{cases}
\end{equation}
If $g_j$ determines the truth values of the predicate slots in the set $\mathcal{Z}_{g_j}$, then
$f=2^{T-|\mathcal{Z}_{g_j}|}$.

For evidence $e$, define the set of counts of observed Q-sentence types as
$
    \mathcal{O}:=\{i:m_i>0\},
    r:=|\mathcal{O}|.
$
The evidence requires $\mathcal{O}\subseteq\mathcal{S}$. Hence, $g_j$ is
falsified if $\mathcal{O}\cap\mathcal{F}_{g_j}\neq\emptyset$; otherwise,
\begin{equation}
\label{eq:rsdc-genpost}
c(g_j\mid e)
=
\mathbf{1}\{\mathcal{O}\cap\mathcal{F}_{g_j}=\emptyset\}
\prod_{i\in\mathcal{F}_{g_j}}
\left(1-\rho_i^{\mathrm{post}}\right),
\end{equation}
where $\rho_i^{\mathrm{post}}$ is derived in
Sec.~\ref{subsec:rsdc-dials} and is evidence conditioned version of existence dials.

\subsection{Singular Sentences and Marginal Probability of Evidence}
\label{subsec:rsdc-marginal}

A singular sentence $e$, such as observations of vehicles, concerns named ordered pairs (i.e., $(v_1, v_2)$) and must be averaged over
both the support and the frequency vector to obtain it's marginal probability:
\begin{equation}
\label{eq:rsdc-triple}
c(e)
=
\sum_{\mathcal{S}}
\Pr(\mathcal{S})
\int_{\Delta_{\mathcal{S}}}
p(\boldsymbol{\theta}\mid\mathcal{S})
p(e\mid\boldsymbol{\theta},\mathcal{S})
\,d\boldsymbol{\theta},
\end{equation}
where $\Delta_{\mathcal{S}}$ is the simplex face selected by
$\mathcal{S}$.

Fix a support $\mathcal{S}$ of width $w=|\mathcal{S}|$. Let the evidence
contain $M$ Q-sentence observations with counts $(m_1,\ldots,m_K)$. Integrating the
categorical likelihood over the conditional Dirichlet distribution yields
\begin{equation}
\label{eq:rsdc-inner}
c(e\mid\mathcal{S})
=
\mathbf{1}\{\mathcal{O}\subseteq\mathcal{S}\}L_w,
\end{equation}
where
\begin{equation}
\label{eq:rsdc-Lw}
L_w
:=
\frac{\Gamma(\lambda)}{\Gamma(\lambda+M)}
\prod_{i\in\mathcal{O}}
\frac{\Gamma(m_i+\lambda/w)}{\Gamma(\lambda/w)}.
\end{equation}
A Q-sentence permitted by the support but unobserved in the evidence has $m_i=0$ and contributes
$\Gamma(\lambda/w)/\Gamma(\lambda/w)=1$ to $L_w$. Since the conditional likelihood $L_w$ depends on
$\mathcal{S}$ only through its width, supports can be grouped by width:
\begin{equation}
\label{eq:rsdc-sizesum}
c(e)
=
\sum_{w=r}^{K}L_w\pi_w,
\qquad
\pi_w
:=
c\bigl(
    |\mathcal{S}|=w,\,
    \mathcal{O}\subseteq\mathcal{S}
\bigr).
\end{equation}

Let $\mathcal{U}:=\{1,\ldots,K\}\setminus\mathcal{O}$ denote the unobserved Q-sentence
types. Any compatible support is
$\mathcal{S}=\mathcal{O}\cup\mathcal{A}$ for some
$\mathcal{A}\subseteq\mathcal{U}$. Under independent inclusion with grouping the supports according to
$y:=|\mathcal{A}|=w-r$ gives,
\begin{equation}
\label{eq:rsdc-poissonbinom}
\pi_w
=
\left(\prod_{i\in\mathcal{O}}\rho_i\right)
\operatorname{PB}_{\mathcal{U}}(w-r),
\end{equation}
where
\[
\operatorname{PB}_{\mathcal{U}}(y)
:=
\sum_{\substack{\mathcal{A}\subseteq\mathcal{U}\\|\mathcal{A}|=y}}
\prod_{i\in\mathcal{A}}\rho_i
\prod_{i\in\mathcal{U}\setminus\mathcal{A}}(1-\rho_i)
\]
is the Poisson--binomial probability mass function of
$J:=\sum_{i\in\mathcal{U}}Z_i$. Therefore,
\begin{equation}
\label{eq:rsdc-expectation}
c(e)
=
\left(\prod_{i\in\mathcal{O}}\rho_i\right)
\sum_{y=0}^{K-r}
\operatorname{PB}_{\mathcal{U}}(y)L_{r+y}
=
\left(\prod_{i\in\mathcal{O}}\rho_i\right)
\mathbb{E}_{J}[L_{r+J}].
\end{equation}

The dependence of $L_{r+J}$ on $\lambda/(r+J)$ creates
\emph{support-size coupling}. Exact evaluation requires the
Poisson--binomial distribution and is infeasible for large $K=2^T$. We
replace the uncertain support width by its posterior mean
$\bar w:=\mathbb{E}[|\mathcal{S}|\mid e]$ to resolve this coupling and define
$
\bar{\lambda}:=\lambda/\bar w.
$
The likelihood then becomes independent of $J$, and
$\sum_{y}\operatorname{PB}_{\mathcal{U}}(y)=1$ gives
\begin{equation}
c(e)
\approx
\left(\prod_{i\in\mathcal{O}}\rho_i\right)
\frac{\Gamma(\lambda)}{\Gamma(\lambda+M)}
\prod_{i\in\mathcal{O}}
\frac{\Gamma(m_i+\bar{\lambda})}
     {\Gamma(\bar{\lambda})}
\label{eq:rsdc-optB}
\end{equation}
with complexity linear in the number of observed Q-sentence types. The sole
approximation is replacing the random width in $L_w$ by $\bar w$.

\subsection{Updating the Existence Dials based on Evidence}
\label{subsec:rsdc-dials}

The posterior existence dial follows from
$
\Pr(Z_i=1\mid e)
=
\frac{
\sum_{\mathcal{S}\ni i}
\Pr(\mathcal{S})c(e\mid\mathcal{S})
}{
\sum_{\mathcal{S}}
\Pr(\mathcal{S})c(e\mid\mathcal{S})
}.
$
If $i\in\mathcal{O}$, i.e., a Q-sentence type observed in the evidence, then every support with nonzero likelihood contains $i$,
and hence
$
i\in\mathcal{O}
\Longrightarrow
\Pr(Z_i=1\mid e)=1.
$

For an unobserved type, two explanations are possible: either the type is
ruled out, with prior probability $1-\rho_i$, or it is permitted but does
not occur in the $M$ observations, with prior probability $\rho_i$. Under
the mean support width $\bar w$, the marginal distribution of an active
coordinate is
$
    \theta_i
    \sim
    \operatorname{Beta}
    \left(
        \bar{\lambda},
        \lambda-\bar{\lambda}
    \right).
$
The probability that an active Q-sentence type remains
unobserved after $M$ observations is
\begin{equation}
\label{eq:rsdc-qN}
q_M
:=
\Pr(m_i=0\mid Z_i=1)
=
\frac{
\Gamma(\lambda)
\Gamma(\lambda-\bar{\lambda}+M)
}{
\Gamma(\lambda-\bar{\lambda})
\Gamma(\lambda+M)
}.
\end{equation}
The quantity $q_M$ decreases monotonically with $M$: as more evidence is
collected, a permitted type becomes progressively less likely to remain
unobserved. Bayes' rule then gives
\begin{equation}
\label{eq:rsdc-D3}
\rho_i^{\mathrm{post}}
:=
\Pr(Z_i=1\mid m_i=0)
=
\frac{
    \rho_i q_M
}{
    \rho_i q_M+(1-\rho_i)
}.
\end{equation}
For small $M$, $q_M\approx1$ and the posterior dial remains close to its
prior value. As $M$ increases, $q_M$ approaches zero and so is the
existence probability of a persistently unobserved Q-sentence type.

The existence dials of Q-sentences can consequently be represented by three groups based on their evidential status:
uniquely observed types, whose posterior dials equal one, i.e., $\rho_i^{\mathrm{post}} = 1$; partially
instantiated types as a result of observing only a subset of predicates in the Q-sentence, represented symbolically by observation cubes introduced in Sec.~\ref{subsec:rsdc-partial}; and
uninstantiated Q-sentence types, i.e., types that are compatible with no
    observation. Under shared prior hyperparameters $(a_0,b_0)$, uninstantiated types share
    the posterior dial 
$
\rho_{\mathrm{unt}}
=
\frac{\rho_0q_M}
     {\rho_0q_M+(1-\rho_0)},
\rho_0=\frac{a_0}{a_0+b_0}.
$
Accordingly,
\begin{equation}
\label{eq:rsdc-barm}
\bar w
=
r
+
\sum_{i\in\mathcal{B}_n}\rho_i^{\mathrm{post}}
+
\left(2^T-r-|\mathcal{B}_n|\right)\rho_{\mathrm{unt}}.
\end{equation}

\begin{remark}[Self-consistent mean support]
\label{rem:rsdc-selfconsistency}
Because $\bar w$ depends on the posterior existence dials, which themselves
depend on the evidence, it is computed iteratively. Starting from
$\bar w^{(0)}$, we evaluate the posterior dials, update
$
    \bar w^{(k+1)}
    =
    \sum_{i=1}^{K}\rho_i^{\mathrm{post},(k)}
$
are updated until convergence.
\end{remark}

These updates yield the three intended behaviors: uniquely observed Q-sentence types
have posterior existence probability 1, partially instantiated Q-sentence types have
intermediate probabilities, and uninstantiated (i.e., never observed) Q-sentence types decrease toward 0 as
evidence accumulates.

\subsection{Soft Counts for Incomplete Evidence}
\label{subsec:rsdc-partial}

As discussed in Section~\ref{sec:fol-representation}, a partial observation is compatible with an observation cube
$B_t\subseteq\{1,\ldots,K\}$ rather than a unique Q-sentence, 
$
    c(X_t\in B_t\mid\boldsymbol{\theta})
    =
    \sum_{i\in B_t}\theta_i,
$
with complete evidence corresponding to $|B_t|=1$. To deal with these types of observations, we introduce soft (fractional) observation counts, and assign these counts based on expectation maximization (EM) updates. Then, the responsibility
and soft count of Q-sentence type $i$ as
\begin{equation}
\label{eq:rsdc-soft}
\gamma_{t,i}
=
\frac{
\mathbf{1}\{i\in B_t\}\theta_i
}{
\sum_{j\in B_t}\theta_j
},
\qquad
\widetilde m_i
=
\sum_{t=1}^{M}\gamma_{t,i},
\qquad
\sum_{i=1}^{K}\widetilde m_i=M.
\end{equation}

The quantity $\widetilde m_i$ is the \emph{soft count} of type $i$. The responsibilities in \eqref{eq:rsdc-soft} depend on
$\boldsymbol{\theta}$ in~\eqref{eq:rsdc-dircat}, which is itself inferred from the soft counts. Under
the symmetric indicator-likelihood approximation, compatible types are
weighted equally:
\begin{equation}
\label{eq:rsdc-equalsplit}
\gamma_{t,i}
=
\frac{\mathbf{1}\{i\in B_t\}}{|B_t|},
\qquad
\widetilde m_i
=
\sum_{t=1}^{M}
\frac{\mathbf{1}\{i\in B_t\}}{|B_t|}.
\end{equation}
Thus, observation $t$ contributes $1/|B_t|$ to every compatible type. The
soft counts are computed in a single EM pass, and the only remaining iteration
is the update of $\bar w$ described in
Remark~\ref{rem:rsdc-selfconsistency}.

Let $\widetilde{\mathcal{O}}:=\{i:\widetilde m_i>0\}$ denote the Q-sentence types implicated by the evidence. Replacing $m_i$ by $\widetilde m_i$ extends the Gamma-function expressions
of Sec.~\ref{subsec:rsdc-marginal} to partial observations. Substituting these into the Gamma-function expressions and using
$\bar{\lambda}=\lambda/\bar w$ gives
\begin{equation}
\label{eq:rsdc-softmarg}
c(e)
\approx
\prod_{i\in\widetilde{\mathcal{O}}}\rho_i
\frac{\Gamma(\lambda)}
     {\Gamma(\lambda+M)}
\prod_{i\in\widetilde{\mathcal{O}}}
\frac{\Gamma(\widetilde m_i+\bar{\lambda})}
     {\Gamma(\bar{\lambda})}.
\end{equation}

Next, we use these probabilities to define semantic content information and entropy.

\section{Inductive Probabilities and Semantic Content Information}
\label{sec:inductive-semantic-information}

The random-support Dirichlet--Categorical framework in
Sec.~\ref{sec:rsdc} induces an inductive logical probability measure
$c(\cdot)$ over $\Language$.

Carnap defines the semantic content information of a statement through its inductive
probability: a statement has greater content when it excludes more
probability mass from the logical state space.

\begin{definition}[Content Information {\cite{c2}}]
\label{cont-info}
The \emph{content information} of a statement $\varphi$ is
\begin{equation}
\label{eq:absolute-content}
\operatorname{cont}(\varphi)
:=
1-c(\varphi).
\end{equation}
Given evidence $e$, its conditional content information is
\begin{equation}
\label{eq:conditional-content}
\operatorname{cont}(\varphi\mid e)
\equiv
\operatorname{cont}(\varphi;e)
:=
1-c(\varphi\mid e).
\end{equation}
\end{definition}

The quantity $\operatorname{cont}(\varphi)$ measures the probability mass
of the logical alternatives excluded if $\varphi$ is true. For example,
consider two propositional atoms $s$ and $r$, producing four equiprobable
states ($s\land r, s\land \neg r, \neg s\land r, \neg s\land \neg r$). The conjunction
$
    \varphi_1\equiv s\land r
$
is true in one state and eliminates the remaining three, giving
$\operatorname{cont}(\varphi_1)=3/4$. In contrast,
$
    \varphi_2\equiv s
$
is true in two states and eliminates two, giving
$\operatorname{cont}(\varphi_2)=1/2$. Thus, less probable statements carry
greater content because their truth excludes more alternatives.

Conditioned on evidence, $\operatorname{cont}(\varphi\mid e)$ measures the
conditional (i.e., residual) content of $\varphi$ after $e$ has been incorporated. If
$c(\varphi\mid e)\approx1$, then $\varphi$ is strongly expected and has low
conditional content. If $c(\varphi\mid e)\approx0$, its truth would be
surprising and would carry high conditional content.

Let
$
    \boldsymbol{\varphi}
    :=
    \{\varphi_1,\ldots,\varphi_L\}
$
be a set of hypotheses (i.e., traffic rules) which are known to both CAVs and RSU in advance. The average \emph{semantic content entropy} of
$\boldsymbol{\varphi}$ is the inductive probability-weighted average content of its
hypotheses:
\begin{equation}
\label{eq:semantic-content-entropy}
\mathrm{H}_{\mathrm{s}}(\boldsymbol{\varphi})
:=
\sum_{i=1}^{L}
c(\varphi_i)\operatorname{cont}(\varphi_i)
\end{equation}

Unlike Shannon entropy, \eqref{eq:semantic-content-entropy} is a linear, or
Gini-type, entropy induced by Carnap's content measure
$\operatorname{cont}(\varphi)=1-c(\varphi)$. Note also that $c(\varphi)$ is the logical probability that a statement $\varphi$ is true. After observing evidence $e$, the remaining semantic content entropy, i.e., semantic conditional content entropy, is
\begin{equation}
\label{eq:conditional-semantic-content-entropy}
\mathrm{H}_{\mathrm{s}}(\boldsymbol{\varphi}\mid e)
:=
\sum_{i=1}^{L}
c(\varphi_i\mid e)
\operatorname{cont}(\varphi_i\mid e)
\end{equation}

The change in content entropy produced by a particular item of evidence
$e$ is
\begin{equation}
\label{eq:semantic-entropy-reduction}
\begin{aligned}
\gamma_{\mathrm{s}}(\boldsymbol{\varphi};e)
&:=
\mathrm{H}_{\mathrm{s}}(\boldsymbol{\varphi})
-
\mathrm{H}_{\mathrm{s}}(\boldsymbol{\varphi}\mid e)\\
&=
\sum_{i=1}^{L}
c(\varphi_i)\operatorname{cont}(\varphi_i)
-
\sum_{i=1}^{L}
c(\varphi_i\mid e)
\operatorname{cont}(\varphi_i\mid e)
\end{aligned}
\end{equation}
A positive value indicates that $e$ concentrates the confirmation
distribution and reduces uncertainty. A negative value is also possible:
the evidence may redistribute probability more uniformly and thereby
increase uncertainty.

\section{Goal-Oriented Semantic Communication}

In our communication setup, each car which transmit evidence to RSU that is most informative in terms of resolving which traffic rule (hypothesis) must be activated. These traffic rules form a hypothesis space and is known by each car and the RSU in advance. Evidence is then \emph{task-informative} precisely when it concentrates posterior probability mass on a small subset of hypotheses or eliminates irrelevant ones. This motivates the following semantic information bottleneck formulation for goal-oriented semantic communication.

\begin{theorem}[Goal-Oriented Semantic Communication Principle]
\label{thm:goal-oriented-semantic}
Let $\boldsymbol{\varphi} = \{\varphi_1, \dots, \varphi_M\}$ be the set of hypotheses (i.e., traffic rules) induced by $L_{\mathcal{T}}$. Given evidence set $e$, the optimal transmitted evidence $\hat e \subseteq e$ for goal-oriented semantic communication is any solution of
\begin{equation}
\label{eq:goal-ib}
\max_{g(\hat e)} \; \mathrm{\gamma_s}(\boldsymbol{\varphi}; \hat{e}) \;-\; \beta\, \mathrm{I}(e; \hat{e}),
\end{equation}
where $\mathrm{I}(\cdot)$ is Shannon mutual information and $g(\cdot)$ is the encoder $g_{\theta}^{\mathrm{ul}}
\bigl(\cdot\mid e_n,\mathcal{L}_{\mathcal T}\bigr)$ or $g_{\phi}^{\mathrm{dl}}
\bigl(\cdot\mid e_{\mathrm R}, d_{\mathrm R},\mathcal{T}_n\bigr)$. 
\end{theorem}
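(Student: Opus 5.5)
This "theorem" is in effect a principle: it says that goal-oriented optimality is characterized by the bottleneck objective \eqref{eq:goal-ib}. The proof therefore has to make precise what "optimal for goal-oriented semantic communication" means, and then show that this notion reduces to \eqref{eq:goal-ib}. I would proceed by axiomatizing the two competing desiderata already implicit in Sections~\ref{sec:collaborative-perception} and~\ref{sec:inductive-semantic-information}, and then deriving the objective as the Lagrangian of a constrained problem.

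\textbf{Goal side.} The receiver (RSU or CAV) must resolve which hypotheses in $\boldsymbol{\varphi}$ are activated. Its residual uncertainty after receiving $\hat e$ is measured by the conditional semantic content entropy \eqref{eq:conditional-semantic-content-entropy}. Since $\mathrm{H}_{\mathrm{s}}(\boldsymbol{\varphi})$ does not depend on the encoder, minimizing the residual uncertainty is equivalent to maximizing $\gamma_{\mathrm{s}}(\boldsymbol{\varphi};\hat e)$ in \eqref{eq:semantic-entropy-reduction}. For a stochastic encoder $g$, I would take the expectation of $\gamma_{\mathrm{s}}$ over $\hat e\sim g(\cdot\mid e)$, so that the objective is a linear functional of $g$.

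I would then check that $\gamma_{\mathrm{s}}$ is a sensible task-relevance score. Using \eqref{eq:degree-confirmation} and \eqref{eq:formula-probability}, each term $c(\varphi_i\mid\hat e)(1-c(\varphi_i\mid\hat e))$ is a Gini-type term. It vanishes exactly when $\hat e$ confirms or refutes $\varphi_i$, which by \eqref{eq:rsdc-genpost} happens when $\mathcal{O}\cap\mathcal{F}_{\varphi_i}\neq\emptyset$ or when the posterior dials go to zero. This justifies the claim that task-informative evidence "concentrates or eliminates" hypotheses. It also shows that evidence irrelevant to $\boldsymbol{\varphi}$, meaning evidence whose cubes $B(\xi)$ do not affect the premises of any rule, leaves the $c(\varphi_i\mid\cdot)$ unchanged and therefore contributes nothing.

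\textbf{Cost side.} The channel constraint $\ell(\hat e)\le B_n^{\mathrm{ul}}$ in \eqref{eq:uplink-selection}, and its downlink analogue \eqref{eq:downlink-selection}, has to be related to $\mathrm{I}(e;\hat e)$. By the source coding theorem, the minimal expected description length of $\hat e$ given a shared codebook is $H(\hat e)$, up to one bit. Because $\hat e$ is generated from $e$ alone, $\hat e$ and $e$ satisfy $\mathrm{I}(e;\hat e)=H(\hat e)-H(\hat e\mid e)$. Averaged over many uses, the rate needed to convey the encoder's output is $\mathrm{I}(e;\hat e)$; this is the standard rate–distortion / information bottleneck argument. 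The budgeted problem is therefore
\[
\max_g \ \mathbb{E}\,\gamma_{\mathrm{s}}(\boldsymbol{\varphi};\hat e)\quad\text{s.t.}\quad \mathrm{I}(e;\hat e)\le R .
\]

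\textbf{Lagrangian step.} I would form the Lagrangian of this problem with multiplier $\beta\ge 0$. The feasible set of encoders $g$ is a product of simplices, and $\mathrm{I}(e;\hat e)$ is convex in $g$ for fixed $p(e)$. The objective $\mathbb{E}\,\gamma_{\mathrm{s}}$ is linear in $g$ once the confirmation values are fixed per $\hat e$. The problem is thus a concave maximization under a convex constraint, and Slater's condition holds because the trivial encoder $\hat e=\emptyset$ gives $\mathrm{I}=0<R$. Strong duality then yields that every budget-optimal encoder solves \eqref{eq:goal-ib} for some $\beta$. Conversely, every solution of \eqref{eq:goal-ib} is optimal for the budget $R=\mathrm{I}(e;\hat e^\star)$. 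This establishes the claimed characterization. The same argument applies verbatim to the downlink, with $e_{\mathrm R},d_{\mathrm R}$ as the source and $\mathcal{T}_n$ as the hypotheses.

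\textbf{Main obstacle.} The delicate step is the linearity, or concavity, of $\mathbb{E}\,\gamma_{\mathrm{s}}$ in $g$. The receiver's confirmation values $c(\varphi_i\mid\hat e)$ are computed from the logical measure of Section~\ref{sec:rsdc} evaluated on the received formulas, not from a posterior induced by $g$. So I would first argue that they are fixed functions of the message $\hat e$. This holds because the receiver conditions on the content of $\hat e$, via \eqref{eq:rsdc-softmarg}, rather than on the encoder's policy. Given that, linearity is immediate. If this fails, I would fall back to stating the result as a Lagrangian relaxation, whose optima are Pareto-optimal on the rate–relevance tradeoff, which still supports the theorem as a principle.
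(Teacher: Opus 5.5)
Your goal-side paragraph is the paper's whole justification. Since $\mathrm{H}_{\mathrm{s}}(\boldsymbol{\varphi})$ is fixed, maximizing $\gamma_{\mathrm{s}}(\boldsymbol{\varphi};\hat e)$ means minimizing $\mathrm{H}_{\mathrm{s}}(\boldsymbol{\varphi}\mid\hat e)$. Its Gini-type terms vanish at $c\in\{0,1\}$ and peak at $c=1/2$, so the best evidence is the one that most sharply resolves each hypothesis. The paper stops there. It stipulates $\beta\,\mathrm{I}(e;\hat e)$ as a ``transmit as little as possible'' penalty and never says in what sense solutions of \eqref{eq:goal-ib} are optimal.

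Your cost-side and Lagrangian steps supply that missing sense, which is a genuinely different route. You correctly note that $\mathbb{E}\,\gamma_{\mathrm{s}}$ is linear in $g$, because $c(\varphi_i\mid\hat e)$ depends only on the content of $\hat e$. Combined with convexity of $\mathrm{I}(e;\hat e)$ in the channel, Slater at $\hat e=\emptyset$ (for $R>0$), and the elementary converse, this shows two things. Every rate-constrained optimum solves \eqref{eq:goal-ib} for some $\beta$, and every solution is optimal at its own rate.

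The price is modeling that you should state as assumptions rather than derive:
\begin{itemize}
\item You need a source law $p(e)$. For one fixed $e$, as the statement literally reads, $\mathrm{I}(e;\hat e)=0$.
\item You use the expectation of $\gamma_{\mathrm{s}}$ instead of its value.
\item Above all, you identify the budget with $\mathrm{I}(e;\hat e)$.
\end{itemize}
That last identification does not follow from source coding:
\begin{itemize}
\item Source coding gives expected length about $H(\hat e)$. This equals $\mathrm{I}(e;\hat e)$ only for deterministic encoders, such as the top-$k$ ranking used in the experiments.
\item A stochastic encoder reaches rate $\mathrm{I}(e;\hat e)$ only with block coding across rounds or shared randomness.
\item In every case this is an average rate. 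By contrast, \eqref{eq:uplink-selection} imposes the hard per-message constraint $\ell(\hat e)\le B_n^{\mathrm{ul}}$, whose natural relaxation would penalize $\mathbb{E}\,\ell(\hat e)$ instead.
\end{itemize}
So your ``therefore'' is the same modeling postulate the theorem itself makes, not a consequence of the system model.

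Also drop the side claim that evidence irrelevant to the rule premises leaves $c(\varphi_i\mid\cdot)$ unchanged. In the random-support model every extra observation increases $M$. At fixed $\bar w$ this lowers $q_M$ in \eqref{eq:rsdc-qN}, and hence the dials \eqref{eq:rsdc-D3} of unobserved forbidden types. That raises $c(g_j\mid e)$ in \eqref{eq:rsdc-genpost} unless the observation falsifies $g_j$, and the observation also shifts $\bar w$. Likewise, for finite $M$ those dials stay positive, so a universal rule is confirmed only asymptotically; its Gini term is small but not exactly zero. Your duality argument uses neither remark, so it is unaffected.
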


Semantic information bottleneck requires transmitting as little as possible, i.e., second term in~\eqref{eq:goal-ib}, while retaining maximal reduction in uncertainty concerning $\boldsymbol{\varphi}$. Since the hypothesis space $\boldsymbol{\varphi}$ is fixed, the semantic content entropy $\mathrm{H_s}(\boldsymbol{\varphi})$ is constant. Maximizing change in semantic entropy $\mathrm{\gamma_s}(\boldsymbol{\varphi}; \hat{e})$ is therefore equivalent to minimizing the conditional semantic content entropy $\mathrm{H_s}(\boldsymbol{\varphi} \mid \hat{e})$. A single hypothesis contributes $c(\varphi_m, \hat e)\,\mathrm{cont}(\varphi_m \mid \hat e)$ to this entropy, which is small when $\varphi_m$ is nearly certainly true ($c \approx 1$, $\mathrm{cont} \approx 0$) or nearly certainly false ($c \approx 0$, $\mathrm{cont} \approx 1$), and maximized at $c = \mathrm{cont} = 0.5$. The optimal evidence is therefore the one that most sharply resolves the truth or falsity of each hypothesis.

\section{Use of AI Disclosure}

In this work, Claude Code (Sonnet 5) has been utilized for coding and ChatGPT 5.6 (Sol) is used for proofreading. 

\section{Experimental Results}
\label{sec:experimental-results}

\subsection{Experimental Setup}

We evaluate the proposed framework on the MDrive benchmark
\cite{Coscoy2026MDriveBC}, implemented in CARLA 0.9.12
\cite{Dosovitskiy2017CARLAAO}. Our implementation reuses code
from~\cite{Cui2025CoopReflectTN}. Our code and experimental logs are
available on GitHub.\footnote{\url{https://github.com/ahmetfsaz/GOLBSCNSR}}

MDrive evaluates cooperative autonomous driving through perception sharing
and decision negotiation. We focus on the MDrive-Interaction
scenarios, which emphasize inter-vehicle coordination and collision
avoidance under occlusion and limited bandwidth. The selected scenarios
include four two-CAV Pre-Crash configurations (variants \texttt{B},
\texttt{C}, \texttt{E}, and \texttt{F} in MDrive), grounded in the NHTSA pre-crash
scenario typology; a four-CAV unprotected left turn (variant 3 in MDrive); a
three-CAV intersection deadlock (variant 3 in MDrive); a four-CAV highway on-ramp
merge (variant 2 in MDrive); a four-CAV roundabout scenario (variant 2 in MDrive); and two
three-CAV minor-road-to-major-road unsignalized-junction scenarios
(MMUJ, variants 9 and 10 in MDrive).

The traffic-rule corpus contains 152 FOL hypotheses derived from the
California Driver Handbook~\cite{california_dmv_2026_handbook}. We compare
two strategies for selecting FOL atoms under equal uplink and downlink
budgets, $k_{\mathrm{up}}=k_{\mathrm{down}}=k$: \textbf{semantic}, which
ranks evidence according to Theorem~\ref{thm:goal-oriented-semantic}, and
\textbf{uniform}, which selects $k$ candidate atoms uniformly at random.
All CAVs use \texttt{gpt-5.4-nano} with greedy decoding. All trials use
$k=25$, except Pre-Crash C, for which we use $k=40$ because of the larger
amount of relevant information.

\subsection{Evaluation Metrics}

Following the CARLA Leaderboard protocol, the
\emph{Driving Score} is
$
    \mathrm{DS}
    =
    \mathrm{RC}\times\mathrm{IP},
$
where $\mathrm{RC}\in[0,100]$ is route completion and $\mathrm{IP}\in[0,1]$
is the multiplicative infraction penalty. Tracked infractions include collision with a pedestrian ($\mathrm{IP} = 0.5$), vehicle ($\mathrm{IP} = 0.6$), static objects such as curb, guardrail, poles ($\mathrm{IP} = 0.65$), running a red light ($\mathrm{IP} = 0.7$), stop-sign infraction ($\mathrm{IP} = 0.8$), and driving outside the route lane ($\mathrm{IP} = 1 - \frac{\% \text{outside lane} }{100}$). We track
collisions separately as they are most important safety failures.

Communication quality is measured by the \emph{relevant-atom selection
rate}
$
    \mathrm{SR}
    :=
    \frac{R_{\mathrm{sel}}}{R_{\mathrm{avail}}},
$
where $R_{\mathrm{avail}}$ and $R_{\mathrm{sel}}$ are the numbers of
safety-relevant atoms available before and retained after evidence selection for transmission,
respectively. Uplink (UL) relevance is determined from the antecedents of FOL traffic
rules, whereas downlink (DL) relevance is determined from their consequents. We average
$\mathrm{SR}$ over all rounds.

\subsection{Experiment Results}

Table~\ref{tab:results} reports the communication and driving outcomes.
Across the ten semantic trials, the mean uplink and downlink selection
rates are $92.7\%$ and $97.1\%$, respectively, with a mean DS of $93.8$
and no collisions. Across the ten uniform trials, the corresponding values
are $44.5\%$, $56.3\%$, and $46.0$, with 22 collisions. In all ten
same-scenario, same-budget comparisons, semantic selection completes the
trial without a collision, whereas uniform selection produces at least one.

\begin{table}[t]
\centering
\caption{Communication and driving performance.}
\label{tab:results}
\resizebox{\columnwidth}{!}{
\begin{tabular}{@{}llcrrrr@{}}
\toprule
Scenario & Selection & $k$ & UL SR & DL SR & DS & Collision\\
\midrule
Pre-Crash B & Semantic & 25 & 82.5\% & 100.0\% & 100.0 & 0\\
            & Uniform  & 25 & 29.7\% & 0.0\% & 36.1 & 2\\
\addlinespace
Pre-Crash C & Semantic & 40 & 96.0\% & 100.0\% & 100.0 & 0\\
            & Uniform  & 40 & 45.7\% & 100.0\% & 28.2  & 2\\
\addlinespace
Pre-Crash E & Semantic & 25 & 92.4\% & 100.0\% & 84.0  & 0\\
            & Uniform  & 25 & 47.8\% & 78.3\%  & 33.0  & 2\\
\addlinespace
Pre-Crash F & Semantic & 25 & 94.3\% & 100.0\% & 93.5 & 0\\
            & Uniform  & 25 & 49.9\% & 81.8\%  & 73.7 & 1\\
\addlinespace
Left Turn/3 & Semantic & 25 & 92.5\% & 94.4\% & 83.9 & 0\\
            & Uniform  & 25 & 45.6\% & 45.9\% & 32.4 & 2\\
\addlinespace
Deadlock/3  & Semantic & 25 & 93.0\% & 100.0\% & 100.0 & 0\\
            & Uniform  & 25 & 48.6\% & 61.4\%  & 52.7  & 2\\
\addlinespace
On-Ramp/2   & Semantic & 25 & 94.6\% & 94.8\% & 99.0 & 0\\
            & Uniform  & 25 & 43.8\% & 36.9\% & 56.5 & 5\\
\addlinespace
Roundabout/2& Semantic & 25 & 93.8\% & 95.6\% & 85.9 & 0\\
            & Uniform  & 25 & 48.6\% & 52.1\% & 76.6 & 1\\
\addlinespace
MMUJ/9      & Semantic & 25 & 93.1\% & 91.1\% & 91.5 & 0\\
            & Uniform  & 25 & 41.5\% & 53.3\% & 33.9 & 3\\
\addlinespace
MMUJ/10     & Semantic & 25 & 94.6\% & 95.5\% & 100.0 & 0\\
            & Uniform  & 25 & 43.3\% & 52.8\% & 37.2  & 2\\

\bottomrule
\end{tabular}}
\vspace{-4mm}

\end{table}

Semantic uplink selection remains between $82.5\%$ and $96.0\%$ across all
trials, whereas uniform selection never exceeds $49.9\%$. This gap shows
that uniform sampling frequently discards rule-grounding evidence when the
candidate pool exceeds the communication budget. Semantic selection also
retains nearly all actionable RSU conclusions: its lowest downlink
selection rate is $91.1\%$, compared with $36.9\%$ for uniform selection
when relevant downlink atoms are available.

At $k=40$ in Pre-Crash C, semantic selection attains a DS of $100$ with no
collisions, compared with a DS of $28.2$ and two collisions under uniform
selection. In Pre-Crash E, semantic selection attains a DS of $84.0$
without a collision, whereas uniform selection yields a DS of $33.0$ and
two collisions. Among the multi-agent scenarios, the largest driving-score
gap occurs in MMUJ/10, where semantic selection achieves a DS of $100$,
compared with $37.2$ under uniform selection. The highway merge exhibits
the largest collision count under uniform selection, with five collisions,
whereas semantic selection achieves a DS of $99.0$ without a collision.

Whenever semantic selection yields a DS below $100$, the reduction is due
to minor lane infractions, incomplete route completion, or timeout before
route completion. No safety hazards were observed under semantic selection.

\subsection{Failure Analysis}
\label{sec:discussion}
Inspection of the communication and decision traces identifies two
principal failure modes. First, uniform uplink selection may discard
complementary facts required by the RSU to derive a safety warning. In
Pre-Crash C, separation and closing-speed facts failed to co-occur in the
selected uplink during the final approach, preventing timely collision-risk
inference. In Pre-Crash B, vehicle-coordinate atoms were eliminated from
the selection pool despite their relevance to the hypotheses, preventing
the RSU from establishing vehicle relevance and deriving cross-vehicle
conclusions.

Second, a correctly derived warning may be dropped or delivered to only
one member of a conflicting pair. In the uniform deadlock trial, a critical
warning was omitted from the downlink to one vehicle, causing a collision.
In the highway-merge trial, one CAV received the warning while the other
did not; a later warning arrived only after the remaining time to collision
had fallen below the $0.5$\,s decision interval, resulting in a collision.

Trace inspection confirmed that every collision under uniform selection
resulted from the omission of safety-critical information. Relevant
evidence was either excluded from the uplink, preventing the corresponding
safety or right-of-way conclusion from being derived, or omitted during
downlink selection. Across these failures, the driving policy acted
consistently with the information it received; collisions occurred because
safety-critical atoms were unavailable to the decision maker. Semantic
selection cannot recover a conclusion that was never derived, but it
substantially reduces uplink starvation and downlink dropout when relevant
atoms are available under the same communication budget. This mechanism
explains its higher driving scores and zero observed collisions in
Table~\ref{tab:results}.

\section{Conclusion}
We developed an FOL-based semantic communication framework for neuro-symbolic decision-making in collaborative autonomous driving. By combining inductive logical probability, goal-oriented evidence selection, RSU-based deduction, and LLM-guided action selection, the framework communicates information according to its relevance to traffic-rule evaluation. Experiments in CARLA demonstrate that semantic selection reliably preserves safety-critical evidence and substantially improves driving performance over uniform selection under identical communication budgets.

\bibliographystyle{unsrt} 
\bibliography{refs} 

\end{document}